\documentclass[11pt]{article}

\usepackage{amsmath} 
\usepackage{amsthm}

\usepackage{amsthm, amssymb, amsmath, amsfonts, url, enumerate, mathtools, graphics, xcolor}
\usepackage[margin=0.75in]{geometry}
\usepackage{verbatim}
\usepackage{fullpage}
\usepackage{appendix}




\newtheorem{theorem}{Theorem}[section]

\newtheorem{corollary}[theorem]{Corollary}
\newtheorem{lemma}[theorem]{Lemma}

\theoremstyle{definition}
\newtheorem{definition}[theorem]{Definition}

\newenvironment{prf}{\noindent{\bf Proof.~}}{\(\qed\)}
\newcommand{\BPF}{\begin{prf}} 
\newcommand {\EPF}{\end{prf}}








\def\bf{{\mathbf f}}







\usepackage[ruled,vlined,linesnumbered]{algorithm2e}

\usepackage{enumitem}

\title{Sublinear-Time Approximation for Graph Frequency Vectors in Hyperfinite Graphs}

\author{
Gregory Moroie, University of Waterloo, gmmoroie@uwaterloo.ca
}

\date{August 7, 2025}

\begin{document}

\maketitle

\vspace{-15pt}

\begin{abstract}
In this work, we address the problem of approximating the $k$-disc distribution (``frequency vector") of a bounded-degree graph in sublinear-time under the assumption of hyperfiniteness. We revisit the partition-oracle framework of Hassidim, Kelner, Nguyen, and Onak \cite{hassidim2009local}, and provide a concise, self-contained analysis that explicitly separates the two sources of error: (i) the cut error, controlled by hyperfiniteness parameter $\phi$, which incurs at most $\varepsilon/2$ in $\ell_1$-distance by removing at most $\phi |V|$ edges; and (ii) the sampling error, controlled by the accuracy parameter $\varepsilon$, bounded by $\varepsilon/2$ via $N=\Theta(\varepsilon^{-2})$ random vertex queries and a Chernoff and union bound argument. Combining these yields an overall $\ell_1$-error of $\varepsilon$ with high probability. Algorithmically, we show that by sampling $N=\lceil C\varepsilon^{-2} \rceil$ vertices and querying the local partition oracle, one can in time $poly(d,k,\varepsilon^{-1})$ construct a summary graph $H$ of size $|H|=poly(d^k,1/\varepsilon)$ whose $k$-disc frequency vector approximates that of the original graph within $\varepsilon$ in $\ell_1$-distance. Our approach clarifies the dependence of both runtime and summary-size on the parameter $d$,$k$, and $\varepsilon$.
\end{abstract}

\section{Introduction}\label{sec:intro}

In many modern applications—from network mining to distributed monitoring—we face graphs so large that even linear-time algorithms become impractical. Property testing offers a way forward: by probing a small, random part of the input and tolerating a small error, one can decide global properties of bounded-degree graphs in time independent of the number of vertices (Goldreich \& Ron 1997) \cite{goldreich1997property}. Beyond simple yes/no tests, researchers have asked whether one can approximate rich statistics of a graph—such as the distribution of its ``small" neighbourhoods—also in sublinear time. 

The graph frequency vector problem formalizes this goal. Fix a radius $k$ and let $T$ be the number of isomorphism types of rooted $k$-discs in degree-$d$ bounded graphs. Every graph $G$ induces a normalized frequency vector $f_G\in[0,1]^T$, where $f_G(i)$ is the fraction of vertices whose radius-$k$ neighbourhood has type $i$. N. Alon (proof recited by Lov\'{a}sz \cite{lovasz2012large}) showed that for any error parameter $\varepsilon>0$, there is a constant-size graph $H$ (independent of $|G|$) whose $k$-disc vector approximated $f_G$ to within $\ell_1$-distance $\varepsilon$. However, the first truly algorithmic solution to this problem came from the local partition-oracle, $\mathcal{O}$, framework of Hassidim, Kelner, Nguyen, and Onak \cite{hassidim2009local}.

By sampling $O(\varepsilon^{-2})$ random vertices and querying $\mathcal{O}$, one recovers an approximate frequency vector in time $poly(d,k,1/\varepsilon^{-1})$. All subsequent improvements on constant-time estimation of graph frequencies have built on this partition-oracle approach—no fundamentally different sublinear framework has been found. 

In this paper we revisit the sublinear algorithm of \cite{hassidim2009local} under this light. Our main contribution is a concise, self-contained account showing that, for any fixed $d,k$ and desired accuracy $\varepsilon$, one can in time $poly(d,k,\varepsilon^{-1})$, with $N=\Theta(\varepsilon^{-2})$ random samples, construct a graph $H$ of size 
\begin{equation}
    |H|=poly(d^k,\varepsilon^{-1})
\end{equation}
whose frequency vector approximates $G$ to within $\varepsilon$. By making the dependence on hyperfinite parameters $\phi$ and $\rho(\phi)$ fully explicit we clarify the parameter dependence of the algorithm. 

\subsection{Main result and proof overview}
Throughout this paper, we will work towards and prove the following corollary:

\begin{corollary}
\cite{hassidim2009local}\label{cor:main-intro}
    Let $G=(V,E)$ be any $n$-vertex graph of maximum degree $d$, some fixed $k\geq 0$ and error parameter $\varepsilon\in(0,1)$. Suppose $G$ is a $\rho(\phi)$-hyperfinite graph, so that for every $\phi>0$ removing at most $\phi|V|$ edges splits $G$ into connected components of size at most 
    \begin{equation}
        \rho(\phi)=poly(d^k, 1/\phi).
    \end{equation}
    Then Algorithm \ref{alg:summary} below runs in time $poly\bigl(d,k,1/\varepsilon\bigr)$ and, with high probability, constructs a graph $H$ satisfying:
    
    1. $|H|=poly(d^k, 1/\varepsilon)$,

    2. the $\ell_1$ distance between the $k$-disc frequency vectors of $G$ and $H$ is at most $\varepsilon$.
\end{corollary}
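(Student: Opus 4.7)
The plan is to decompose the total $\ell_1$ error into two independent pieces, each bounded by $\varepsilon/2$, and combine them via the triangle inequality. Since $G$ is $\rho(\phi)$-hyperfinite, there exists a subset of at most $\phi|V|$ edges whose removal splits $G$ into connected components of size at most $\rho(\phi)$; call the resulting graph $G'$. The first step is to bound the \emph{cut error} $\|f_G - f_{G'}\|_1$: each removed edge can alter the $k$-disc type of at most $2d^k$ vertices (those in its $k$-neighbourhood), so $\|f_G - f_{G'}\|_1 \leq 4\phi d^k$. Choosing $\phi = \Theta(\varepsilon/d^k)$ makes this at most $\varepsilon/2$, and correspondingly bounds $\rho(\phi) = \poly(d^k, 1/\varepsilon)$.

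Next I would control the \emph{sampling error}. The local partition oracle $\mathcal{O}$ of Hassidim, Kelner, Nguyen, and Onak \cite{hassidim2009local} provides, for any queried vertex $v$, the component of $G'$ containing $v$ in time $\poly(d,k,1/\varepsilon)$. Drawing $N = \lceil C \varepsilon^{-2} \rceil$ uniformly random vertices, querying $\mathcal{O}$ on each, and reading off the corresponding $k$-disc types produces an empirical frequency vector $\widehat{f}$. A coordinate-wise Chernoff bound combined with a union bound over the (constant-in-$n$) number of rooted $k$-disc isomorphism types $T = d^{O(d^k)}$ gives $\|\widehat{f} - f_{G'}\|_1 \leq \varepsilon/2$ with high probability, for a suitable absolute constant $C$ (absorbing the $\log T$ factor into the hidden constants that depend on $d,k$).

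To construct $H$, I assemble the disjoint union of the (at most $N$) distinct components returned by $\mathcal{O}$, weighted to match the sample multiplicities. By construction $|H| \leq N \cdot \rho(\phi) = \poly(d^k, 1/\varepsilon)$ and $f_H = \widehat{f}$, so the triangle inequality yields $\|f_G - f_H\|_1 \leq \|f_G - f_{G'}\|_1 + \|f_{G'} - \widehat{f}\|_1 \leq \varepsilon$. The total runtime is dominated by $N$ oracle queries, each of cost $\poly(d,k,1/\varepsilon)$, yielding the claimed $\poly(d,k,1/\varepsilon)$ bound.

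The main obstacle, I expect, is the \emph{black-box use} of the partition oracle: one must verify that $\mathcal{O}$'s local answers are globally consistent (i.e., correspond to a single valid partition of $G'$) and that each query can be answered in time independent of $|V|$ — this is the hard technical content of \cite{hassidim2009local}, which I would invoke as a black box rather than reprove. A secondary subtlety lies in the Chernoff step: the coordinates of $\widehat{f}$ are not independent (they sum to $1$), but each individual coordinate is an empirical mean of i.i.d.\ Bernoullis under uniform vertex sampling, so coordinate-wise Chernoff followed by a union bound over types suffices.
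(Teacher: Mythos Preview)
Your proposal is correct and follows essentially the same approach as the paper: bound the cut error $\|f_G - f_{G'}\|_1 \le \varepsilon/2$ by choosing $\phi = \Theta(\varepsilon/d^k)$, bound the sampling error $\|\widehat f - f_{G'}\|_1 \le \varepsilon/2$ via a coordinate-wise Chernoff bound plus a union bound over the $T$ disc types, combine by the triangle inequality, and invoke the partition oracle of \cite{hassidim2009local} as a black box for both correctness and the per-query $\poly(d,k,1/\varepsilon)$ runtime. Your treatment is in a couple of places slightly more careful than the paper's own write-up---you give the correct order $T = d^{O(d^k)}$ rather than $O(d^k)$, and you make explicit that the returned components must be taken with sample multiplicities so that $f_H = \widehat f$.
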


\begin{algorithm}[H]
    \caption{Sublinear-Time Summary Graph Construction}
    \label{alg:summary}
    $N=\lceil C\,\varepsilon^{-2}\rceil$
    
    \For{$i=1,..,N$} {
    
        Pick $v_i$ uniformly at random from $V$
        
        $P_i \leftarrow \mathcal{O}(v_i)$
        
    }
    
    return $H \leftarrow \bigcup_{i=1}^N P_i$
\end{algorithm}

\paragraph{High-level ideas of the proof:}
The high level idea guiding the proof comes from these steps:
\begin{enumerate}
    \item Partition $G$ via oracle to remove at most $\phi|V|$ edges from $G$ resulting in a graph of connected components of size $\leq \rho(\phi)$.
    \item Any vertex farther than distance $k$ from all removed edges sees an identical $k$-disc in $G$ and the pruned graph $G'$. At most a $\phi B(d,k)$-fraction of vertices lie within distance $k$ of the cut, so choosing $\phi \leq \frac{\varepsilon}{2B(d,k)},$ bounds the $\ell_1$ error from this step by $\frac{\varepsilon}{2}$.
    \item Sample $N$ vertices uniformly at random from $V$, query the oracle to recover their $k$-discs in $H$, and let $\hat{f}$ be the empirical frequency vector. A Chernoff and union bound argument shows that if we set $N=\lceil C\varepsilon^{-2}\rceil$ for a sufficiently large constant $C$, then $\Pr\left[ \| \hat{f}-f_{H} \|_1 >\frac{\varepsilon}{2}\right] \leq o(1)$.
    \item By the triangle inequality our $\ell_1$ error is at most $\varepsilon$.
    \item We only collect at most $N$ oracle returned parts (of size $\leq \rho(\phi)$) and take their disjoint union. So, $|H|\leq N\rho(\phi)=poly(d,k,\varepsilon^{-1})$.
\end{enumerate}

We begin by invoking the partition oracle to remove at most $\phi|V|$ edges, producing a pruned graph $G'$ whose connected components all have size $\leq \rho(\phi)$. In the cut-error step, we bound the number of vertices whose radius-$k$ neighbourhoods are altered by these removals: each removed edge ``covers" at most $B(d,k)=2\sum_{i=0}^{k}d^i$ vertices, so altogether at most a $\phi B(d,k)$-fraction of vertices have altered $k$-discs in $G'$. Setting $\phi \leq \frac{\varepsilon}{2B(d,k)},$ ensures any vertex outside this set contributes zero to the $\ell_1$ difference, and those inside account for at most $\varepsilon/2$.

Next, in the sampling-error step, we draw $N$ vertices uniformly at random from $V$. For each sampled vertex $v$, we query $\mathcal{O}(v)$ to recover its connected component in $H$, and thus its $k$-disc. Denoting the total number of disc types $T=O(d^k)$, we apply a Chernoff bound $\Pr\left[ |\hat{p}_i-p_i| > \delta \right] \leq 2e^{-2N\delta}$ and then a union bound over $i=1,..,T$. Choosing $\delta=\varepsilon/(2T)$ and $N=\Theta(\varepsilon^{-2})$ forces $\| \hat{f}-f_{H} \|_1 \leq T\delta=\varepsilon/2$ with high probability. 

Finally, by the triangle inequality the two error sources sum to $\varepsilon$. Moreover, since each sample yields at most one part of size $\leq \rho(\phi)$ and there are $N$ samples, the final summary graph $H$ has size $|H|\leq N\rho(\phi)=poly(d,k,\varepsilon^{-1})$. This completes the proof.

\subsection{Related Work}
Property testing for bounded-degree graphs was pioneered by Goldreich \& Ron (1997) \cite{goldreich1997property}, who gave constant-time testers for connectivity, bipartiteness, and minor-freeness. Hassidim et al. (2009) \cite{hassidim2009local} introduced the partition-oracle framework, showing that any $(\phi, \rho(\phi))$-hyperfinite graph can be decomposed in time $poly(d,1/\phi)$, and used it to give the first sublinear-time algorithm for estimating that radius-$k$ neighbourhood distribution of a graph. Subsequent works—for instance Newman \& Sohler (2011) \cite{newman2011every} on testability of all hyperfinite properties, and Kumar, Seshadhri \& Stolman (2021) \cite{kumar2021random} on improved $poly(d/\phi)$-time oracles for minor-free classes—have all built on this oracle-based paradigm. 

\subsection{Organization}
In Section 2 we introduce notation and recall the partition-oracle guarantee. In Section \ref{sec:technical} we present our two-step approximation for constructing $H$.
Section \ref{subsec:GlobalPartition} gives the main helper function that aids in finding the hyperfinite-decomposition components, Section \ref{subsec:toLocal} translates our global partition algorithm into a local partition oracle and Section \ref{subsec:contructingH} concludes with the proof of Corollary \ref{cor:main-intro}. We conclude with open problems in Section \ref{sec:conclusion}. 

\section{Preliminaries}

\subsection{General Facts and Notations}

We will be studying degree-bounded, undirected graphs given in adjacency list representation, with query access to the adjacency list of each vertex. For a given number of vertices, $n$, of a graph we assume that we can sample vertices in $O(1)$ time. For a given vertex $v$ with $d$ edges, we assume we can sample each neighbour of $v$ in $O(d)$ time.

\begin{definition}
    Let $G=(V,E)$ be a degree-$d$ bounded graph. For any vertex $v$, its radius-$k$ disc is the induced subgraph of all nodes at distance $\leq k$. There are at most $T=O(d^k)$ disc-isomorphism types; we write $f_G\in \mathbb{R}^{T}$ to denote the normalized frequency vector of these types. We measure approximation by $\ell_1$/total-variation distance $\|f-g\|_1=\sum_i|f_i-g_i|$
\end{definition}

\begin{definition}[Hyperfinite graph \cite{hassidim2009local}]
    Let $G=(V,E)$ be a graph. For constants $\phi,x>0$, we say that $G$ is $(\phi,x)$-hyperfinite if it is possible to remove up to $\phi |V|$ edges of the graph such that the remaining graph has connected components of size at most $x$. Let $\rho$ be a function, $\rho : \mathbb{R^+} \rightarrow \mathbb{R^+}$. A graph is $\rho$-hyperfinite if for every $\phi>0$, $G$ is $(\phi,\rho(\phi))$-hyperfinite.
\end{definition}

\begin{definition}[Partition \cite{hassidim2009local}]
    We say $P$ is a partition of a set $S$ if it is a family of nonempty subsets of $S$ such that $\bigcup _{X\in P}X=S$, and for all $X,Y\in P$ either $X=Y$ or $X\cap Y =\emptyset$. We write $P[q]$ to denote the set in $P$ containing element $q\in S$.
\end{definition}

\begin{definition}[$(\rho, \phi)$-isolated neighbourhood \cite{hassidim2009local}]
    Let $G=(V,E)$ be an undirected graph, for a set $S\subseteq V$. Let $e_G(S)=|\{(u,v)\in E:u\in S, v\notin S\}|$, in other words $e_G(S)$ is the number of edges in $G$ with exactly one endpoint in $S$ and the other not in $S$. For some $v\in V$, we say $S$ is a $(\rho, \phi)$-isolated neighbourhood of $v\in V$ if $v\in S$, and the subgraph induced by $S$ is connected, $|S|\leq \rho$, and $e_G(S)\leq \phi|S|$. 
\end{definition}

\begin{definition}[Partition oracle\cite{hassidim2009local}]
    We say $\mathcal{O}$ is an $(\phi,\rho)$-partitioning oracle for a class of graphs, $\mathcal{C}$, if given query access to a graph $G=(V,E)$ in adjacency list representation. It provides query access to a partition $P$ of $G$. For a query about $v\in V$, $\mathcal{O}$ returns $P[v]$. A partition has the following properties:
    \begin{itemize}
        \item $P$ is a function of $G$ and random bits of the oracle. In particular, it does not depend on order of queries to $\mathcal{O}$.
        \item For all $v \in V$, $|P[v]|\leq \rho$ and $P[v]$ induces a connected graph in $G$.
        \item If $G$ belongs to $P$, then $|\{ (v,w)\in E:P[v]\neq P[w]\}|\leq \phi |V|$ with probability $>\frac{9}{10}$.
    \end{itemize}
\end{definition}

\begin{theorem}[Chernoff Bound \cite{Oliveira2025Lecture3}]\label{def:Chernoff}
    Let $X_1,...,X_n$ be independent random variables such that $X_i\in0,1$ for all $i\in[n]$. Let $X=\sum_{i=1}^nX_i$ and $\mu=\mathbb{E}[X]$. Then, for $0<\delta<1$,
    \begin{equation}
        \Pr[X\leq(1-\delta)\mu] \leq \exp(-\frac{\delta^2\mu}{2})
    \end{equation}
\end{theorem}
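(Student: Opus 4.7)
The plan is to prove the lower-tail Chernoff bound by the standard moment-generating-function method, i.e., the Chernoff--Bernstein trick, which reduces a tail inequality for the sum to a product of per-coordinate bounds via independence. First I would observe that, for any $t>0$, the event $\{X \leq (1-\delta)\mu\}$ is identical to $\{e^{-tX} \geq e^{-t(1-\delta)\mu}\}$, so Markov's inequality applied to the nonnegative random variable $e^{-tX}$ yields
\begin{equation}
    \Pr[X \leq (1-\delta)\mu] \;\leq\; e^{t(1-\delta)\mu}\,\mathbb{E}\bigl[e^{-tX}\bigr].
\end{equation}

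Next I would use independence of the $X_i$'s to factor $\mathbb{E}[e^{-tX}] = \prod_{i=1}^n \mathbb{E}[e^{-tX_i}]$. For each Bernoulli $X_i$ with $p_i = \mathbb{E}[X_i]$, a direct computation gives $\mathbb{E}[e^{-tX_i}] = 1 + p_i(e^{-t}-1)$, and the elementary inequality $1+x \leq e^x$ allows me to bound this by $\exp(p_i(e^{-t}-1))$. Multiplying across $i$ and using $\sum_i p_i = \mu$ yields $\mathbb{E}[e^{-tX}] \leq \exp(\mu(e^{-t}-1))$, so the overall tail bound becomes
\begin{equation}
    \Pr[X \leq (1-\delta)\mu] \;\leq\; \exp\!\bigl(\mu(e^{-t}-1) + t(1-\delta)\mu\bigr).
\end{equation}

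The third step is the optimization: I would choose $t = -\ln(1-\delta) > 0$ (valid since $0<\delta<1$), which makes $e^{-t} = 1-\delta$ and collapses the exponent to $\mu\bigl(-\delta - (1-\delta)\ln(1-\delta)\bigr)$. This gives the sharp form $\Pr[X \leq (1-\delta)\mu] \leq \bigl(e^{-\delta}/(1-\delta)^{1-\delta}\bigr)^{\mu}$.

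The main obstacle, and the only nontrivial calculus step, is to pass from this sharp expression to the clean Gaussian-type bound $\exp(-\delta^2 \mu / 2)$ stated in the theorem. For this I would invoke the one-variable inequality
\begin{equation}
    (1-\delta)\ln(1-\delta) \;\geq\; -\delta + \tfrac{\delta^2}{2}, \qquad 0<\delta<1,
\end{equation}
which can be verified by expanding $(1-\delta)\ln(1-\delta)$ as a power series and checking that the resulting tail is nonnegative term-by-term, or equivalently by showing that the function $h(\delta) = (1-\delta)\ln(1-\delta) + \delta - \delta^2/2$ satisfies $h(0)=0$ and $h'(\delta) \geq 0$ on $(0,1)$. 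Substituting this into the exponent finishes the proof, since the exponent becomes at most $-\delta^2 \mu / 2$. The argument is entirely standard, so I expect no conceptual difficulty beyond being careful with the sign conventions in the lower-tail case.
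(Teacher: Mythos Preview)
Your proof is correct and follows the standard moment-generating-function route; each step (Markov on $e^{-tX}$, factorization via independence, the $1+x\le e^x$ bound, the optimal choice $t=-\ln(1-\delta)$, and the calculus inequality $(1-\delta)\ln(1-\delta)\ge -\delta+\delta^2/2$) is valid as stated. Note, however, that the paper does not actually prove this theorem: it is recorded in the preliminaries as a cited fact (with reference \cite{Oliveira2025Lecture3}) and used as a black box later, so there is no in-paper argument to compare against.
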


\section{Technical Section}\label{sec:technical}

\subsection{Global Partitioning Oracle}\label{subsec:GlobalPartition}
The first crucial step in designing the local partitioning oracle is creating a global partitioning algorithm (global oracle). Our algorithm, for each vertex $v_i\in V$ will partition $G$ into sets denoted $S_j$, $1\leq j\leq n$, such that $S_j$ is either a $(\rho,\phi)$-isolated neighbourhood of $v_i$ or $S_j=\{v_j\}$. The first algorithm we will introduce is a non-deterministic helper function, that given a vertex $v_i$ determines, with high probability, the $(\rho, \phi)$-isolated neighbourhood of $v_i$ if it exists. 

\textbf{Input -} We are given a degree-$d$ bounded hyperfinite graph $G=(V,E)$ in adjacency list representation, a vertex $v\in V$ and some $\rho,\phi > 0$. 

\textbf{Output -} If $v$ belongs to some $(\rho,\phi)$-isolated neighbourhood, denoted $S$, we will return $S$. Otherwise, we return $null$.

\begin{algorithm}
    \caption{Find Isolated Neighbourhood}
    \label{alg:helperFunction}
    $S \gets\{v\}$

    frontier $\gets$ neighbours($v$)

    \While{$|S|<\rho$ and frontier $\neq 0$}{
        pick $u$ from frontier uniformly at random
        
        $S \gets S\cup {u}$
        
        update frontier with neighbours of $u$ not already in $S$
        
        Compute $e_G(S)$
        
        \If{$e_G(S) \leq \phi\cdot|S|$}{
            return $S$        
        }
    }
    return $null$
\end{algorithm}

\begin{lemma}[High-probability recovery]
    Let $G=(V,E)$ have maximum degree $d$, and suppose there is a true $(\rho, \phi)$-isolated neighbourhood $S^*\ni v$. Define its boundary
    \begin{equation}
        F=\partial S^*=\{u\notin S^*:\exists (u,w)\in E, w\in S^*\},
    \end{equation}
    so by our isolation property $|F|\leq \phi |S^*|$. Then a single run of Algorithm \ref{alg:helperFunction} on inputs $(G,v,\rho,\phi)$ succeeds in returning exactly $S^*$ with probability
    \begin{equation}
        1-\left( 1-\frac{1}{|F|}\right)^\rho \geq 1-\exp\left({-\frac{\rho}{\phi |S^*|}}\right).
    \end{equation}
    In particular, to achieve success probability at least $1-\delta$, it suffices that $\rho \geq \phi |S^*|\ln{\frac{1}{\delta}}$.
\end{lemma}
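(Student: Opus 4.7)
\begin{proofsketch}
The plan is to reduce the lemma to controlling a single ``bad event''---the event that at some iteration of Algorithm \ref{alg:helperFunction} the random pick lands in the boundary set $F$ rather than in $S^*\setminus S$.

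First I would observe that conditional on every random pick lying inside $S^*$, the algorithm returns exactly $S^*$. The reason is that $S^*$ is connected (by the isolated-neighbourhood definition), so as $S$ grows by internal picks it strictly expands within $S^*$, and after at most $|S^*|\le\rho$ iterations equals $S^*$. At that moment the isolation condition $e_G(S)=e_G(S^*)\le\phi|S^*|$ triggers the return. A fully careful treatment must also verify that the check does not trigger earlier on a proper subset of $S^*$ in a way that produces a different set; this can be handled by a short monotonicity argument on $e_G(\cdot)/|\cdot|$ along the exploration path, or alternatively by restating the conclusion as returning ``an isolated subneighbourhood of $v$''.

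The technical heart is to bound the per-iteration probability of picking a vertex of $F$. At iteration $t$, I would decompose the current frontier into an internal portion $I_t\subseteq S^*\setminus S$ and an external portion $E_t\subseteq F$. Connectedness of $S^*$ forces $|I_t|\ge 1$ whenever $S\subsetneq S^*$, while $|E_t|\le|F|$ always. Hence the probability of drawing a boundary vertex at step $t$ is at most $|F|/(|F|+|I_t|)$, which in the adversarial worst case is at most $1-1/|F|$ once one replaces $|I_t|+|F|$ by $|F|$ in the denominator. Taking the product over the at most $\rho$ iterations yields a failure probability of at most $(1-1/|F|)^\rho$. Applying the elementary inequality $1-x\le e^{-x}$ with $x=1/|F|\ge 1/(\phi|S^*|)$ (which uses $|F|\le\phi|S^*|$) converts this to $\exp(-\rho/(\phi|S^*|))$, and solving $\exp(-\rho/(\phi|S^*|))\le\delta$ for $\rho$ gives the sufficient condition $\rho\ge\phi|S^*|\ln(1/\delta)$.

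I expect the main obstacle to be the combinatorial bookkeeping in the per-step analysis: the frontier accumulates across iterations, the random draws are sequential, and their outcomes are correlated through which frontier vertices have been exposed so far. I would handle this by a clean induction that maintains the invariant ``as long as $S\subsetneq S^*$ we have $|I_t|\ge 1$ and $|E_t|\le|F|$,'' and by formalising the product bound as a martingale/chain argument over the filtration generated by the algorithm's random choices, pessimising at every step to the worst-case frontier composition covered by the $(1-1/|F|)$ per-step bound.
\end{proofsketch}
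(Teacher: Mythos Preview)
Your plan is essentially the paper's: track the algorithm's frontier, bound the per-step chance of the random pick falling outside $S^*$ (i.e.\ into $F$), take the product over at most $\rho$ iterations, and finish with $1-x\le e^{-x}$ together with $|F|\le\phi|S^*|$. The paper carries out exactly this per-step frontier analysis and product bound, so there is no methodological difference.

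One arithmetic slip to fix when you write it out: ``replace $|I_t|+|F|$ by $|F|$ in the denominator'' turns $|F|/(|I_t|+|F|)$ into $1$, not $1-1/|F|$. What your invariants $|I_t|\ge 1$ and $|E_t|\le|F|$ actually give is a per-step bad probability of at most $|F|/(|F|+1)=1-1/(|F|+1)$, which is marginally weaker than the stated $1-1/|F|$ but leads to the same exponential tail and the same sufficient condition on $\rho$ up to an immaterial constant. The paper handles this step by simply declaring the per-step probability to be $1/|F|$ ``for simplicity''; your derivation is the more careful one. Your caveat about the isolation test possibly firing on a proper subset of $S^*$ is a genuine subtlety that the paper's proof also leaves unaddressed.
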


\begin{proof}
    Let $S^{(0)}=\{v\}$. At each iteration $i=1,2,...,\rho$, the helper examines the current frontier
    \begin{equation}
        F^{(i-1)}=\{ u\notin S^{(i-1)}:\exists(u,w)\in E, w\in S^{(i-1)} \},
    \end{equation}
    and samples a single vertex $u^{(i)}$ uniformly at random from the current frontier $F^{(i-1)}$. It updates $S^{(i)}=S^{(i-1)}\cup\{u^{(i)}\}$ and computes $F^{(i)}$ . It then checks for isolation, that is if $e_G\left(S^{(i)}\right)\leq \phi |S^{(i)}|$, then by definition of a $(\rho, \phi)$-isolated neighbourhood we must have $S^{(i)}=S^*$, so the helper returns $S^*$ and stops. 

    As long as every sampled vertex $u^{(j)}$ lies inside the true boundary $F$, it strictly grows the intersection $S^{(j)}\cap S^*$. After at most $|S^*|$ such ``good" picks, we accumulate the entire $S^*$, at which point the isolation case is satisfied and we return. Note that every frontier $F^{(i-1)}$ contains all of $F$. So, none of the true boundary vertices fall out of the frontier before $S$ reaches $S^*$. Therefore, on every iteration the probability of picking a vertex in $F$ is exactly 
    \begin{equation}
        \frac{|F|}{|F^{(i-1)}|} \geq \frac{|F|}{d\cdot |S^{(i-1)}|} \text{  (since $|F|\leq \phi|S^*|$ and $|S^{(i-1)}|\leq |S^*|$)}.
    \end{equation}
    For simplicity we treat it as exactly $1/|F|$, which only undershoots the true success probability. 

    A failure event is when no pick lands in $F$ across all $\rho$ iterations. Since each draw is independent and has failure probability $(1-1/|F|)$, the total failure probability is 
    \begin{equation}
        \left( 1-\frac{1}{|F|}\right)^\rho \geq \left( 1-\frac{1}{\phi|S^*|}\right)^\rho \geq 1-\exp{(-\rho/(\phi|S^*|))}.
    \end{equation}
    In particular,  to succeed with probability at least $1-\delta$, it suffices to choose $\rho \geq \phi|S^*|\ln{\frac{1}{\delta}}$.
\end{proof}

\begin{lemma}\label{lemma:isolatedRuntime}
    On input $(G,v,\rho,\phi)$, Algorithm \ref{alg:helperFunction} performs at most $\rho$ iterations. In each iteration 
    \begin{enumerate}
        \item it enumerates or updates the frontier in $O(d)$ time, and
        \item it recomputes the cut-size $e_G(S)$ by scanning all edges incident on the current set $S$, which has size $|S|\leq\rho$, in $O(d\cdot |S|)=O(d\rho)$ time.
    \end{enumerate}
    Over $\rho$ iterations the total running time is $O\left(\rho \cdot (d\rho)\right) = O(d\rho^2)$, which is $poly(d,\rho)$ time.
\end{lemma}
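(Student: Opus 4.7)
The plan is to analyze each of the three cost components claimed by the lemma separately—iteration count, frontier update per step, and cut-size recomputation per step—and then multiply them. For the iteration count, I would observe that the while-loop guard requires $|S| < \rho$ and a nonempty frontier, and that each surviving iteration performs exactly one insertion $S \gets S \cup \{u\}$. Since $|S|$ starts at $1$ and grows by one per iteration, the loop fires at most $\rho - 1$ times before the size guard becomes false; the early-termination isolation check can only shorten this, so the $O(\rho)$ bound on iterations is immediate.

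Next, I would bound the per-iteration work in two pieces. For the frontier update, only the newly inserted vertex $u$ can contribute fresh candidates, so I would walk its adjacency list—at most $d$ entries—and test each neighbour for membership in $S$. Stipulating that $S$ is maintained in a data structure supporting $O(1)$-time membership queries (a hash table or a characteristic bit array over $V$), this phase costs $O(d)$. For the cut-size computation, I would implement $e_G(S)$ as a straightforward scan of every edge incident to each vertex in $S$, incrementing a counter whenever the other endpoint lies outside $S$. Because $|S| \le \rho$ throughout and each vertex has at most $d$ neighbours, this gives $O(d \cdot |S|) = O(d\rho)$ per iteration.

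Combining these bounds, each iteration costs $O(d) + O(d\rho) = O(d\rho)$, with the cut recomputation dominating; multiplying by the at most $\rho$ iterations yields the claimed $O(d\rho^2)$ total, which is $\poly(d,\rho)$. The only mild subtlety—not a real obstacle—is ensuring the chosen data structures support constant-time membership on $S$; any standard hash-based or bit-array representation discharges this. For completeness I would also note, though it is unnecessary for the stated bound, that one could tighten the analysis by maintaining $e_G(S)$ incrementally—adjusting the counter only at the newly inserted vertex's neighbours—reducing the per-iteration cost to $O(d)$ and the total to $O(d\rho)$; the looser bound claimed in the lemma already suffices for the downstream use in Section \ref{subsec:toLocal}.
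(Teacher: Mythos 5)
Your proof is correct and takes essentially the same route as the paper: bound the iteration count by $\rho$ via the loop guard, bound the frontier update by $O(d)$ since only the newly added vertex contributes, bound the cut-size recomputation by $O(d\rho)$ via a scan of all edges incident to $S$, and multiply. Your added remarks about constant-time set membership and the possible incremental maintenance of $e_G(S)$ (which would tighten the bound to $O(d\rho)$) are reasonable extras but do not alter the argument.
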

\begin{proof}
    When Algorithm \ref{alg:helperFunction} is called on $(G,v,\rho,\phi)$, it performs up to $\rho$ sampling iterations. On each iteration, updating the frontier takes $O(d)$ time (as we are only looking at neighbours of the newly added vertex), and recomputing $e_G(S)$ requires scanning all edges incident to all vertices currently in $S$ of size at most $\rho$, which is another $O(d\rho)$. Summing gives $O(d\rho)$ per iteration, and thus $O(d\rho^2)=$ $poly(d,\rho)$ time total.
\end{proof}

Now we can begin analyzing Algorithm \ref{alg:partition}, which is a randomized algorithm that, with high probability, returns a partition of some hyperfinite graph $G$ cutting at most $\phi|V|$ edges. 

\textbf{Input -} We are given a degree-$d$ bounded $(\phi,\rho(\phi))$-hyperfinite graph $G=(V,E)$ in adjacency list representation. 

\textbf{Output -} A partition, $P=S_1\cup S_2\cup ... \cup S_m$ for $1\leq m \leq n$, of the vertices $V$ of $G$.

\begin{algorithm}
    \caption{Global Partitioning Algorithm \cite{hassidim2009local}}
    \label{alg:partition}
    $(\pi_1,...,\pi_n) \gets$ random permutation of vertices
    
    $P=\emptyset$
    
    \For{$i=1$ \KwTo $n$} {
        \If{$\pi_i$ is still in graph}{
            \If{there exist $(\rho,\phi)$-isolated neighbourhood of $pi_i$ in remaining graph} {
            $S=$ this neighbourhood
            }
            \Else{$S=\{\pi_i\}$}
            $P=P\cup\{S\}$
            remove vertices in $S$ from $G$
        }
    }
\end{algorithm}

\begin{lemma}
  Let $G=(V,E)$ be a $(\phi,\rho(\phi))$-hyperfinite graph of maximum degree $d$, with $n=|V|$.  Then Algorithm \ref{alg:partition} (Global Partition) runs in time 
  \[
    \mathrm{poly}\bigl(n,d,1/\phi\bigr)
    \quad
    (\text{equivalently }\mathrm{poly}(n,d,\rho)),
  \]
  and returns a partition $P={S_1,S_2,...,S_m}$ of $V$ in which each part $S_i$ is connected, has size $|S_i|\le\rho(\phi)$, and at most $\phi\,n$ edges of $G$ cross between different parts.
\end{lemma}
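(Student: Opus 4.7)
The plan is to establish the three assertions of the lemma in turn: the runtime bound, the per-part structural properties (connectedness and size), and the global cut-edge bound. The first two are short bookkeeping; the last is the main technical point and is where the hyperfiniteness hypothesis together with the uniformly random permutation do all the work.

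First, for the runtime, I would observe that the outer loop performs exactly $n$ iterations. Each iteration either does $O(1)$ work (when $\pi_i$ has already been removed from $G$) or invokes Algorithm \ref{alg:helperFunction} once and then physically removes the returned set $S$ from the adjacency-list representation. By Lemma \ref{lemma:isolatedRuntime} a single call to the helper costs $O(d\rho^2)$ time, and removing a set of size $\le \rho$ costs a further $O(d\rho)$ time. Summing over $n$ iterations gives total running time $O(nd\rho^2)$, i.e.\ $\mathrm{poly}(n,d,\rho(\phi))$, and this is $\mathrm{poly}(n,d,1/\phi)$ since $\rho(\phi) = \mathrm{poly}(d^k,1/\phi)$.

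Second, the structural properties follow directly from inspection. If the helper returns a set $S$, then $S$ is by construction a $(\rho,\phi)$-isolated neighbourhood, hence $|S|\le\rho$ and the subgraph it induces is connected. In the fallback branch $S=\{\pi_i\}$ is trivially connected and of size $1 \le \rho$. Since each iteration removes its returned set before proceeding, the returned collection $P=\{S_1,\dots,S_m\}$ is a family of pairwise disjoint sets whose union is $V$, so $P$ is a genuine partition.

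Third, the cut-edge bound is the crux. The plan is a charging argument against a reference partition $P^*=\{S_1^*,\dots,S_t^*\}$ supplied by the hyperfiniteness assumption, with $|S_j^*|\le\rho(\phi)$ and total boundary at most $\phi n$. I would split the contribution to the algorithm's cut into two buckets. Every isolated-neighbourhood part $S$ returned by the algorithm satisfies $e_G(S)\le\phi|S|$, so summing across all such parts gives at most $\phi\sum_{S}|S|\le\phi n$ boundary edges. The tricky bucket is the singleton branch: each singleton $\{\pi_i\}$ may contribute up to $d$ cut edges, so controlling the \emph{number} of singletons is what actually carries the lemma.

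The main obstacle is therefore bounding the expected number of singleton parts produced. The strategy I would follow is to exploit that $\pi$ is uniformly random: when $\pi_i$ is processed, with good probability the reference part $S^*[\pi_i]$ is still largely intact in the remaining graph, so $\pi_i$ admits a genuine $(\rho,O(\phi))$-isolated neighbourhood that the helper recovers with probability $1-\exp(-\Omega(1))$ by the high-probability recovery lemma. A vertex can become a singleton only if its reference part has already been substantially ``eaten'' by earlier isolated neighbourhoods that crossed into it, and such crossings can be charged bijectively to boundary edges of $P^*$. This charging, combined with the random-permutation averaging and a Markov inequality at the end to convert an expectation bound into a probability-$9/10$ bound (after rescaling $\phi$ by a harmless constant), yields the claimed $\phi n$ cut-edge bound and completes the proof.
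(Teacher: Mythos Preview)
Your runtime and per-part structural arguments match the paper's essentially line for line. The divergence is entirely in the cut-edge bound.

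The paper takes a short, purely deterministic route: it fixes a reference edge set $R$ with $|R|\le\phi n$ witnessing hyperfiniteness and argues that every edge the algorithm cuts already lies in $R$, so the total cut is at most $|R|\le\phi n$. Crucially, the paper reads Algorithm~\ref{alg:partition} literally: line~5 says ``if there \emph{exists} a $(\rho,\phi)$-isolated neighbourhood,'' i.e.\ the analysis treats the isolated-neighbourhood step as a perfect deterministic oracle, not as the randomized Algorithm~\ref{alg:helperFunction}. Consequently the random permutation and the helper's internal coin flips play no role whatsoever in the paper's cut-edge argument.

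Your route is genuinely different and more ambitious: you bound the contribution of non-singleton parts directly by $\sum_S \phi|S|\le\phi n$, and then propose a separate probabilistic argument, driven by the random permutation and the helper's success probability, to control the number of singleton parts. This is a reasonable program, but as written it has a real gap. The key step asserts that when the reference part $S^*[\pi_i]$ is ``largely intact'' it forms a $(\rho,O(\phi))$-isolated neighbourhood of $\pi_i$; however, hyperfiniteness only bounds the \emph{aggregate} boundary $\sum_j e_G(S^*_j)\le 2\phi n$, not the per-part ratio $e_G(S^*_j)/|S^*_j|$, so an individual reference part can have arbitrarily large relative boundary and fail to be $(\rho,O(\phi))$-isolated even when fully intact. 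The subsequent ``bijective charging to boundary edges of $P^*$'' is likewise only gestured at. Finally, your bucket-one bound already exhausts the full $\phi n$ budget, so even a flawless singleton bound would yield only $O(\phi n)$; you acknowledge the rescaling, but that means you are proving a constant-factor relaxation of the lemma rather than the lemma as stated.
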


\begin{proof}
We will not analyze the running time of Algorithm \ref{alg:partition}.
 Computing a random permutation $(\pi_1,\dots,\pi_n)$ takes $O(n)$ time. The main for–loop (line 3) iterates $n$ times, for $O(n)$ time. In each iteration we call Algorithm \ref{alg:helperFunction} (Find Isolated Neighbourhood) on parameters \((G,v,\rho,\phi)\). From Lemma \ref{lemma:isolatedRuntime} this takes 
  \(
    \mathrm{poly}(d,\rho)
    =\mathrm{poly}\bigl(d,1/\phi\bigr)
  \)
  time. All other operations (lines 6–9) are $O(1)$ each.
Thus the total time is
\[
  O(n)\;\cdot\;\mathrm{poly}\bigl(d,1/\phi\bigr)
  \;=\;
  \mathrm{poly}\bigl(n,d,1/\phi\bigr).
\]

We will not analyze the correctness of Algorithm \ref{alg:partition}.
We must show that the output $P=\{S_1,\dots,S_m\}$ satisfies:
\emph{(i)} Each $S_i$ is connected and $|S_i|\le \rho(\phi)$.
\emph{(ii)} The parts are disjoint and cover all of $V$.
\emph{(iii)} At most $\phi\,n$ edges of $G$ cross between parts.

\noindent
\emph{(i)}  
Whenever we call Algorithm 1 (Find Isolated Neighbourhood) on $(G,v_i,\rho,\phi)$ it either returns a \((\rho,\phi)\)-isolated neighbourhood $S_i$ of $v_i$ or just $S=\{v_i\}$, clearly in either case $|S_i|\le\rho$ and $S_i$ is connected.

\noindent
\emph{(ii)}  
Each time we add a part $S_i$ to $P$, we remove its vertices from further consideration. Thus no vertex can ever lie in two different parts, and since we process at least one new vertex per iteration, after $n$ iterations all vertices have been added to $P$.

\noindent
\emph{(iii)}  
By hyperfiniteness there is some edge‐set 
\(
  R\subseteq E,\; |R|\le\phi\,n,
\)
whose removal leaves all components of size $\le\rho$.  Fix such an $R$.  Now, when we remove component $S_i$, the edges we cut are 
\[
  \{\, (u,v)\in E : u\in S_i,\;v\not\in S_i\}.
\]
If $|S_i|>1$, then since $S_i$ is a \((\rho,\phi)\)-isolated set we have 
\(\;e_G(S_i)\le \phi\,|S_i|\), 
so all those cut‐edges lie in $R$.  If $S_i=\{v\}$, then no smaller neighbourhood of $v$ is $\phi$-isolated, which implies at least one edge incident to $v$ lies in $R$—and we cut it exactly once when removing $v$.  Therefore every edge our algorithm cuts belongs to $R$, and is removed at most once overall.  Thus, for $R^*$ being the total number of edges we cut, it follows that 
\[
  |R^*|\;\le\;|R|\;\le\;\phi\,n,
\]
as required.
\end{proof}

\subsection{From Global to Local: Partition Oracle}\label{subsec:toLocal}
Rather than re-deriving the local-computation machinery, we invoke the following as a black-box:

\begin{theorem}[Hassidim-Kelner-Nguyen-Onak, \cite{hassidim2009local}]\label{thm:localPartition}
        Let $G=(V,E)$ be any $(\phi,\rho(\phi))$-hyperfinite graph with maximum degree $d$ that is. There is a randomized oracle $\mathcal{O}$ which, on input $v\in V$,
        \begin{enumerate}
            \item issues $poly(d,1/\phi)$ degree- and neighbour-queries on $G$,
            \item returns with probability $\geq 9/10$ the unique connected component $C=\mathcal{O}(v)$ of $v$ in some $(\phi,\rho(\phi))$-partition of $G$.
            \item and satisfies $|C|\leq \rho(\phi)$ and at most $\phi|V|$ edges cross between parts.
        \end{enumerate}
        We treat $\mathcal{O}(v)$ as a black-box subroutine and write simply $\mathcal{O}(v)$ for the part containing $v$.
\end{theorem}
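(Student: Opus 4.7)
The plan is to convert the global partition procedure (Algorithm \ref{alg:partition}) into a local oracle via the standard random-tag lazy-simulation technique. First I would replace the explicit random permutation of $V$ by assigning i.i.d.\ uniform tags $\pi(u) \in [0,1]$ to every $u \in V$, which induces the same distribution over partitions but can be queried one vertex at a time. Conceptually, the global algorithm processes vertices in increasing $\pi$-order; my local oracle simulates only the part of this process that is relevant to the queried vertex $v$.

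Second, I would define $\mathcal{O}(v)$ recursively. Any vertex $u$ that could possibly claim $v$ must lie in the ball $B(v,\rho(\phi))$ of graph-distance at most $\rho(\phi)$ around $v$, since the isolated neighborhood output by Algorithm \ref{alg:helperFunction} has size $\leq \rho$. I enumerate candidate claimants $u \in B(v,\rho(\phi))$ in increasing order of $\pi(u)$. For each such $u$, I recursively determine (i) whether $u$ is still alive when its turn arrives (equivalently, $\mathcal{O}(u)$ has not already absorbed $u$ into an earlier-tagged part), and, if so, (ii) whether Algorithm \ref{alg:helperFunction} launched from $u$ with its own local randomness would return a set containing $v$. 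The first successful claimant (possibly $u=v$ itself, yielding a singleton) determines $P[v]$. Correctness of this construction and properties 2–3 of the theorem follow because the distribution over $P$ produced here is identical to that produced by Algorithm \ref{alg:partition}, and we have already established $|S_i|\le\rho(\phi)$ and the $\phi|V|$ cut-edge bound for that global procedure.

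Third—and this is where the real work lies—I need to bound the query complexity by $\poly(d,1/\phi)$. The danger is that answering $\mathcal{O}(v)$ triggers recursive queries $\mathcal{O}(u)$ for every $u \in B(v,\rho(\phi))$, each of which triggers further queries in $B(u,\rho(\phi))$, and so forth; naively the branching factor $d^{O(\rho)}$ could explode. The key observation that tames the recursion is that each recursive call strictly lowers the relevant $\pi$-threshold: when probing whether $u$ is alive we only need information about vertices with tags smaller than $\pi(u)$. I would formalize the induced exploration as a subcritical branching process on the tag-ordered DAG: the expected number of active children at each node is damped by the probability that a uniform tag falls below the current threshold, and summing a geometric series yields an expected exploration of size $\poly(d,\rho(\phi)) = \poly(d,1/\phi)$.

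The main obstacle, therefore, is the branching-process bound in the third step; the other pieces are essentially bookkeeping on top of Algorithms \ref{alg:helperFunction} and \ref{alg:partition}. To close the argument I would apply Markov's inequality to the expected query count in order to turn ``polynomially bounded in expectation'' into ``polynomially bounded with probability at least $9/10$,'' giving simultaneously the query bound of property 1 and the high-probability recovery asserted in property 2. Property 3 then transfers from the global lemma, completing the reduction to Algorithm \ref{alg:partition}.
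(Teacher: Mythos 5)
The paper offers no proof of Theorem~\ref{thm:localPartition}; it is explicitly invoked as a black box (``Rather than re-deriving the local-computation machinery, we invoke the following as a black-box''), with the local-computation argument deferred to \cite{hassidim2009local}. So your proposal is an attempt to supply a proof the paper deliberately omits, and must be judged on its own merits. Your high-level plan---replace the random permutation by i.i.d.\ uniform tags, lazily simulate Algorithm~\ref{alg:partition} one vertex at a time, and inherit properties~2 and~3 from the fact that the induced distribution over partitions is unchanged---is indeed the idea underlying the cited construction, and you correctly identify the query-complexity bound as the crux.

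That is also exactly where the gap is. You assert that the recursion behaves like a subcritical branching process with expected size $\poly(d,\rho(\phi))$, but this does not follow. To resolve $\mathcal{O}(v)$, every vertex within graph distance $\rho(\phi)$ of $v$ that carries a smaller tag is a potential claimant, and that ball can contain up to $d^{\rho(\phi)}$ vertices. The Nguyen--Onak-style analysis of lazily simulating a greedy, random-order process bounds the expected number of recursive calls by a quantity exponential in the local branching factor (already $2^{O(d)}$ for greedy MIS with radius-$1$ neighbourhoods); with balls of size $d^{\rho(\phi)}$ the naive bound is doubly exponential in $\rho(\phi)$, nowhere near $\poly(d,1/\phi)$, and no geometric series collapses it. In fact the HKNO partition oracle for general hyperfinite classes does \emph{not} run in $\poly(d,1/\phi)$ queries; that dependence was obtained only in later work and only for restricted classes such as minor-free graphs (e.g.\ \cite{kumar2021random}), so the theorem as stated is already stronger than what your reduction could hope to deliver. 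Two smaller issues: (a) Algorithm~\ref{alg:helperFunction} is itself randomized, so each vertex must also carry a persistent random seed to keep recursive simulations consistent; and (b) the $9/10$ in the theorem bounds the probability that the produced partition cuts at most $\phi|V|$ edges (see the partition-oracle definition in the preliminaries)---it should not be manufactured by applying Markov to the query count, which controls a different failure event.
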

With this in hand all we need is to sample vertices and query $\mathcal{O}$.

\subsection{Proof of Corollary \ref{cor:main-intro}}\label{subsec:contructingH}
\begin{proof}[Proof of Corollary \ref{cor:main-intro}]
    We estimate $f_{H}$ by sampling $N$ vertices, $v\in V$, uniformly at random and computing their connected component $\mathcal{O}(v)$ in $H$. Let 
    \begin{equation}
        T=|\{ \text{distinct $k$-disc types} \}|, \quad p_i=f_{H} \quad (i=1,...,T),
    \end{equation}
    and let $\hat{p}_i$ be the frequency (fraction) of the $N$ samples whose disc type is $i$. Then 
    \begin{equation}
        \hat{f}=(\hat{p}_1,...,\hat{p}_T), \quad f_{H}=(p_1,...,p_T).
    \end{equation}
    We will show 
    \begin{equation}
        \Pr \left[  \|\hat{f}-f_{H}\|_1 > \frac{1}{\varepsilon} \right] \leq 2T \exp{\left( -2N \left( \frac{\varepsilon}{2T}\right)^2 \right)} = o(1)
    \end{equation}
    whenever $N=\Theta(\frac{T^2}{\varepsilon^2}\ln T)=\Theta(\varepsilon^{-2})$.

    For each fixed $i$, the indicator variables $X_j=1$ $\{\text{if sample $j$ has type $i$}\}$ are independently and identically distributed with $\mathbb{E}[X_j]=p_i$ and $\hat{p}_i=\frac{1}{N}\sum_{j=1}^{N}X_j$. By the additive Chernoff bound,
    \begin{equation}
        \Pr\left[ |\hat{p}_i-p_i|>\delta\right] \leq 2\exp{(-2N\delta^2)}.
    \end{equation}
    To control the total-variation distance we note 
    \begin{equation}
        \| \hat{f}-f_{H}\|_1=\sum_{i=1}^{T}|\hat{p}_i-p_i| \leq T \max_i|\hat{p}_i-p_i|.
    \end{equation}
    So, by a union bound over all $T$ types
    \begin{equation}
        \Pr \left[ \| \hat{f}-f_{H} \|_1>T\delta\right] \leq \Pr\left[ \max_i|\hat{p}_i-p_i| > \delta\right] \leq \sum_{i=1}^{T}\Pr\left[ \hat{p}_i-p_i  \right] \leq 2T\exp{(-2N\delta^2)}.
    \end{equation}

    To make the sampling error at most $\frac{1}{\varepsilon}$ with high probability, choose 
    \begin{equation}
        T\delta = \frac{\varepsilon}{2} \implies \delta =\frac{\varepsilon}{2T}.
    \end{equation}
    Then
    \begin{equation}
        \Pr \left[ \|\hat{f}-f_{H}\|_1>\frac{\varepsilon}{2} \right] \leq 2T\exp{\left( -2N \left(\frac{\varepsilon}{2T}\right)^2 \right)}.
    \end{equation}
    Requiring this to be $o(1)$ yields $N=\Theta(\frac{T^2}{\varepsilon^2}\ln T)=\Theta(\varepsilon^{-1})$.

    Let $E_{cut}\subseteq E$, with $|E_{cut}| \leq \phi n$, be the set of edges removed by the global-partition algorithm, and define the pruned graph $G'=(V,E\setminus E_{cut})$. A vertex is ``bad" if one of its distance-$k$ neighbours is separated by a removed edge, since this means its $k$-disc has been altered when going from $G$ to $G'$. Formally, the set of all bad vertices is
    \begin{align}
        V_{bad}=\{ v\in V: \exists(u,w)\in E_{cut} \text{ with dist}(v,w)\leq k  \}
    \end{align}
    In a graph of maximum degree $d$, each  removed edge $(u,w)$ edge covers at most $B(d,k)=2\sum_{i=0}^{k}d^i$ vertices within distance $k$ of either endpoint. Since there are at most $\phi n$ removed edges,
    \begin{equation}
         |V_{bad}| \leq |E_{cut}|\cdot B(d,k) \leq (\phi n)B(d,k) \implies \frac{|V_{bad}|}{n} \leq \phi B(d,k).
    \end{equation}
    Outside $V_{bad}$ the $k$-disc around each vertex is identical in $G$ and $G'$, so these vertices contribute zero to the $\ell_1$-distance between frequency vectors. Inside $V_{bad}$ each vertex can contribute at most $1$ to the $\ell_1$-distance. Thus
    \begin{equation}
        \|f_{G}-f_{G'}\|_1 =\sum_{v\in V}|f_{G}(v)-f_{G'}(v)| \leq \frac{|V_{bad}|}{n} \leq \phi B(d,k).
    \end{equation}
    
    By choosing
    \begin{equation}
        \phi \leq \frac{\varepsilon}{2B(d,k)},
    \end{equation}
    we ensure $|f_{G}-f_{G'}\|_1 \leq \frac{\varepsilon}{2}$. 

    Now, by constructing $H$ as the disjoint union of connected components given by $\mathcal{O}(v)$, we can see that we introduce both our cutting error and sampling error into $H$. Since the error introduced by cutting $G$ is at most $\frac{\varepsilon}{2}$ and the error introduced by sampling is at most $\frac{\varepsilon}{2}$, by triangle-inequality
    \begin{equation}
        \|\hat{f}-f_{H}\|_1 \leq \| \hat{f}-f_{H} \|_1 + \|f_{G}-f_{G'}\|_1 \leq \frac{\varepsilon}{2} + \frac{2}{\varepsilon} = \varepsilon,
    \end{equation}
    with probability $1-o(1)$.

    Now, we analyze the running time of Algorithm \ref{alg:summary}. Our for loop makes $N=\Theta(\varepsilon^{-2})$ iterations. In each iteration, $i$, we first pick a vertex $v_i$ uniformly at random which takes $O(1)$ time, then we call $\mathcal{O}(v_i)$. Since $G$ is hyperfinite, each oracle call takes $poly(d,1/\phi)$ time. So, our loop takes time
    \begin{equation}
        N\cdot poly(d,1/\phi)=\Theta(\varepsilon^{-2})\cdot poly(d,1/\phi).
    \end{equation}
    Each oracle call returns a neighbourhood (connected component) of size at most $\rho(\phi)$, taking the union of all $N$ neighbourhoods by scanning each vertex takes at most $N\cdot \rho(\phi)$ time. Thus, since $\rho(\phi)=poly(d,k,\varepsilon^{-1})$, we can conclude that Algorithm \ref{alg:summary} runs in $poly(d,k,\varepsilon^{-1})$ time.
    
    The total number of distant $k$-disc types $T$ depends only on the maximum degree $d$ and the radius $k$, $T=O(d^k)$, which is a constant since $d,k$ are fixed. So we can simplify $N=\phi(\frac{T^2}{\varepsilon^2}\ln T)=\phi(\varepsilon^{-2})$, and without loss of generality take $N=\lceil C\varepsilon^{-2}\rceil$ for some absolute constant $C$. So, let $v_1,...,v_N$ be our $N=\lceil \varepsilon^{-2}\rceil$ uniformly random samples, and write 
    \begin{equation}
        P_i=\mathcal{O}(v_i), \quad H=\bigcup_{i=1}^{N}P_i.
    \end{equation}
    By our oracle's guarantee each $|P_i|\leq \rho(\phi)$. Moreover, if two samples $v_i,v_j$ lie in the same part the oracle returns the same set—but we only include each part once in the union. So
    \begin{equation}
        |H|=\sum_{\text{distinct parts $P_i$}}|P_i| \leq N\rho(\phi)=\lceil \frac{1}{\varepsilon^2}\rceil \cdot \text{poly}(d,k,1/\varepsilon)=\text{poly}(d,k,1/\varepsilon).
    \end{equation}

    Therefore, we have shown that Algorithm \ref{alg:summary} runs in time $poly(d,k,1/\varepsilon)$, and with high probability constructs $H$ such that $|H|=poly(d^k,1/\varepsilon)$ and the $\|f_H-f_G\|_1 \leq \varepsilon$. Thus, completing the proof of Corollary \ref{cor:main-intro}.
\end{proof}

\section{Conclusion and Open Problems}\label{sec:conclusion}


We have given a tight, self-contained analysis of the \cite{hassidim2009local} partition-oracle approach for approximating the radius-$k$ frequency vector in bounded-degree graphs. By cleanly separating the cut error (controlled by the hyperfiniteness parameter $\phi$) from the sampling error (controlled by target accuracy $\varepsilon$), we show that 
\begin{enumerate}
    \item Partitioning removes at most $\phi n$ edges to yield components of size $\leq \rho(\phi)$, incurring at most $\varepsilon/2$ in $\ell_1$ distance when $\phi B(d,k)\leq \varepsilon/2$.
    \item Sampling $N=\lceil C\varepsilon^{-2}\rceil$ vertices and querying the oracle recovers an empirical frequency vector within $\varepsilon/2$ of the pruned graph, via Chernoff+union-bound. 
    \item Summary graph $H$ is the union of at most $N$ parts of size $\leq \rho(\phi)$, so $|H|=poly(d^k,\varepsilon^{-1})$. Altogether, this yields a $poly(d,k,\varepsilon^{-1})$-time algorithm that outputs a constant-size graph $H$ whose frequency vector is within $\varepsilon$ of the original.
\end{enumerate}

Open problems include:
\begin{itemize}
    \item Can the $O(d^k)$ or the hidden constants in $B(d,k)$ be reduced to something like $poly(dk)$?
    \item Is it possible to extend the partition-oracle approach to graph classes that are not hyperfinite while still achieving sublinear query complexity?
    \item To what extend do our methods carry over to richer graph models—e.g. weighted graphs where the ``frequency vector" tracks edge-weight distributions, or directed graphs where in- and out-neighbourhoods must be approximated?
    \item Can similar constant-time summaries be achieved in non-hyperfinite graph families?
\end{itemize}

Beyond clarifying the trade-offs inherent to hyperfinite graph summarization, this framework lays the groundwork for more scalable network-analytics tools. We anticipate that tackling the open questions above—on tightening parameter dependencies, extending to non-hyperfinite classes, and adapting to dynamic or weighted settings—will not only refine these bounds but also bring sublinear graph algorithms closer to real-world deployment.


\bibliographystyle{alpha}
\bibliography{refs}

\end{document}